\theoremstyle{plain}
\newtheorem{thm}{Theorem}[section]
\newtheorem{prop}[thm]{Proposition}
\newtheorem{conj}[thm]{Conjecture}
\newtheorem{ques}[thm]{Question}
\newcommand{\Ex}{\operatorname{Ex}}
\newcommand{\F}{\operatorname{F}}
\newcommand{\fw}{\operatorname{fw}}
\newcommand{\fl}{\operatorname{fl}}
\title{An algorithm for bounding extremal functions of forbidden sequences}
\date{}
\author{{\bf{Jesse Geneson}}\\
\small Iowa State University, Ames, IA 50011, USA\\
{\small\em geneson@gmail.com}}
\begin{document}
\maketitle

\begin{abstract}
Generalized Davenport-Schinzel sequences are sequences that avoid a forbidden subsequence and have a sparsity requirement on their letters. Upper bounds on the lengths of generalized Davenport-Schinzel sequences have been applied to a number of problems in discrete geometry and extremal combinatorics. Sharp bounds on the maximum lengths of generalized Davenport-Schinzel sequences are known for some families of forbidden subsequences, but in general there are only rough bounds on the maximum lengths of most generalized Davenport-Schinzel sequences. One method that was developed for finding upper bounds on the lengths of generalized Davenport-Schinzel sequences uses a family of sequences called formations. 

An $(r, s)$-formation is a concatenation of $s$ permutations of $r$ distinct letters. The formation width function $\fw(u)$ is defined as the minimum $s$ for which there exists $r$ such that every $(r, s)$-formation contains $u$. The function $\fw(u)$ has been used with upper bounds on extremal functions of $(r, s)$-formations to find tight bounds on the maximum possible lengths of many families of generalized Davenport-Schinzel sequences. Algorithms have been found for computing $\fw(u)$ for sequences $u$ of length $n$, but they have worst-case run time exponential in $n$, even for sequences $u$ with only three distinct letters. 

We present an algorithm for computing $\fw(u)$ with run time $O(n^{\alpha_r})$, where $r$ is the number of distinct letters in $u$ and $\alpha_r$ is a constant that only depends on $r$. We implement the new algorithm in Python and compare its run time to the next fastest algorithm for computing formation width. We also apply the new algorithm to find sharp upper bounds on the lengths of several families of generalized Davenport-Schinzel sequences with $3$-letter forbidden patterns.
\end{abstract}

\section{Introduction}
Upper bounds on the lengths of generalized Davenport-Schinzel sequences have been used to bound the complexity of faces in arrangements of arcs with a limited number of pairwise crossings \cite{agsh}, the complexity of unions of fat triangles \cite{petties}, the complexity of lower envelopes of sets of polynomials of bounded degree \cite{DS}, and the number of edges in $k$-quasiplanar graphs with no pair of edges intersecting in more than a bounded number of points \cite{fox}. The original Davenport-Schinzel sequences of order $s$, denoted $DS(n, s)$-sequences, are sequences with $n$ distinct letters and no adjacent same letters that avoid alternations of length $s+2$. Although generalized Davenport-Schinzel sequences are not well-understood in general, sharp bounds have been found on the maximum lengths of $DS(n,s)$ sequences for both fixed $s$ and $s = \Omega(n)$ \cite{niv, pettie3, welpet, gends}. In \cite{gpt}, we developed an algorithm for obtaining upper bounds on the lengths of generalized Davenport-Schinzel sequences, giving sharp bounds for many new families of forbidden patterns. 

We say that a sequence $u$ \emph{contains} a sequence $v$ if some subsequence of $u$ is isomorphic to $v$. Otherwise $u$ \emph{avoids} $v$. A sequence is called \emph{$r$-sparse} if every contiguous subsequence of length $r$ has no repeated letters. If $u$ has $r$ distinct letters, then $\mathit{Ex}(u, n)$ is the maximum possible length of an $r$-sparse sequence that avoids $u$ with $n$ distinct letters. For example, the maximum possible length of a $DS(n, s)$-sequence is $\mathit{Ex}(u, n)$ when $u$ is an alternation of length $s+2$. Other than a few families of forbidden sequences $u$ with sharp bounds, only rough bounds are known for $\mathit{Ex}(u, n)$ in general.

An \emph{$(r, s)$-formation} is a concatenation of $s$ permutations of $r$ distinct letters. Formations were introduced in \cite{klazar1} as a tool for obtaining general upper bounds on $\mathit{Ex}(u, n)$. The extremal function $\F_{r, s}(n)$ is defined as the maximum length of an $r$-sparse sequence with $n$ distinct letters that avoids all $(r, s)$-formations. Klazar proved that if $u$ has length $s$ and $r$ distinct letters, then every $(r, s-1)$-formation contains $u$, so $\Ex(u, n) \leq \F_{r, s-1}(n)$. Nivasch improved the bound by proving that every $(r, s-r+1)$-formation contains $u$, so $\Ex(u, n) \leq \F_{r, s-r+1}(n)$ \cite{niv}. Nivasch also proved that $\F_{r, 2t-1}(n) = n2^{\frac{1}{(t-2)!}\alpha(n)^{t-2} \pm O(\alpha(n)^{t-3})}$ and $\Ex((a b)^t, n) = n2^{\frac{1}{(t-2)!}\alpha(n)^{t-2} \pm O(\alpha(n)^{t-3})}$ for all $r \geq 2$ and $t\geq 3$. The results of \cite{agshsh, klazar1, pettie, pettie3} give that $\F_{r, 4}(n) = \Theta(n \alpha(n))$, $\Ex(ababa, n) = \Theta(n \alpha(n))$, $\Ex(abababa, n) = \Theta(n \alpha(n) 2^{\alpha(n)})$, and $\Ex((ab)^t a, n) = n2^{\frac{1}{(t-2)!}\alpha(n)^{t-2} \pm O(\alpha(n)^{t-3})}$ for all $t\geq 4$. Pettie \cite{pettie3} proved that $\F_{2, 6}(n) = \Theta(n \alpha(n) 2^{\alpha(n)})$ and $\F_{2, 2t}(n) =  n2^{\frac{1}{(t-2)!}\alpha(n)^{t-2} \pm O(\alpha(n)^{t-3})}$ for all $t\geq 4$, but $\F_{r, 2t}(n) =  n2^{\frac{1}{(t-2)!}\alpha(n)^{t-2}(\log{\alpha(n)}\pm O(1))}$ for all $r, t\geq 3$.

In \cite{gpt}, we defined a function called the {\it formation width} of $u$, denoted by $\fw(u)$, to be the minimum value of $s$ such that there exists an $r$ for which every $(r, s)$-formation contains $u$. We also defined the \emph{formation length} of $u$, denoted by $\fl(u)$, to be the minimum value of $r$ for which every $(r, \fw(u))$-formation contains $u$. We found that $\Ex(u, n) = O(\F_{\fl(u),\fw(u)}(n))$ for all sequences $u$ with $r$ distinct letters \cite{gpt}, so we developed an algorithm to compute $\mathit{fw}(u)$ and we used $\mathit{fw}(u)$ to prove sharp bounds on $\mathit{Ex}(u, n)$ for several families of forbidden sequences $u$. We found that $\mathit{fw}((1 2 \ldots l)^{t})=2t-1$ for all $l\geq 2$ and $t\geq 1$, and thus $\mathit{Ex}((1 2 \ldots l)^{t}, n)=n2^{\frac{1}{(t-2)!}\alpha(n)^{t-2}\pm O(\alpha(n)^{t-3})}$ for all $l \geq 2$ and $t\geq 3$. This improved the upper bound from \cite{fox} on the number of edges in $k$-quasiplanar graphs with no pair of edges intersecting in more than $O(1)$ points.

In \cite{gtseq}, we developed a faster algorithm for computing formation width and used it to identify every sequence $u$ for which $u$ contains $ababa$ and $\mathit{fw}(u) = 4$. As a corollary, we showed that $\mathit{Ex}(u, n)=\Theta(n\alpha(n))$ for every such sequence $u$, which is the same as $\mathit{Ex}(a b a b a, n)$ up to a constant factor. Define a sequence $u$ to be \emph{minimally non-linear} if $\Ex(u, n) = \omega(n)$ but $\Ex(u', n) = O(n)$ for all sequences $u'$ properly contained in $u$. Pettie \cite{petmnl} posed the problem of determining every minimally non-linear sequence $u$. Every known minimally non-linear sequence $u$ has formation width $4$, so identifying sequences with formation width $4$ provides candidates for other minimally non-linear sequences. In \cite{GT1}, we used $\mathit{fw}(u)$ to find tight bounds on the lengths of generalized Davenport-Schinzel sequences that avoid $a b c (a c b)^t a b c$, answering a problem from \cite{gpt}, as well as generalized Davenport-Schinzel sequences that avoid $a b c a c b (a b c)^t a c b$. 

In this paper, we present an algorithm for computing $\fw(u)$ for sequences $u$ with at most $r$ distinct letters that has run time bounded by a polynomial in the length of $u$, where the degree of the polynomial depends on $r$. We also include an implementation of the algorithm in Python, and we compare its run time to the previous fastest algorithm for computing formation width. Furthermore, we apply the algorithm to obtain sharp bounds on $\Ex(u, n)$ for sequences $u$ with $3$ distinct letters that contain alternations of length $6$, $7$, $8$, or $9$. 

Given an alternation $u = (a b)^t$ of even length, let $u'$ be any sequence obtained by replacing each $b$ in $u$ with $bc$ or $cb$. So $(a b c)^t$ and $(a b c)^{t-1} a c b$ are two of the $2^t$ possibilities for $u'$. One might expect that $\Ex(u', n) = \omega(\Ex(u, n))$ in general, but we found in \cite{gpt} that $\fw((a b c)^t) = \fw((a b c)^{t-1} a c b) = \fw((a b)^t) = 2t-1$, which implies that $\Ex((a b c)^t, n) = n2^{\frac{1}{(t-2)!}\alpha(n)^{t-2}\pm O(\alpha(n)^{t-3})}$ and $\Ex((a b c)^{t-1} a c b, n) = n2^{\frac{1}{(t-2)!}\alpha(n)^{t-2}\pm O(\alpha(n)^{t-3})}$ for $t \geq 3$. A natural problem is to identify the sequences $u'$ for which $\Ex(u', n) = \Theta(\Ex(u, n))$, as well as the more open-ended problem of identifying the sequences $u'$ for which $\Ex(u', n) \approx \Ex(u, n)$. The latter problem makes more sense than the former for $t \geq 4$, since the bounds on $\Ex((ab)^t, n)$ for $t \geq 4$ are not tight up to a constant factor. We approach this problem by using the new formation width algorithm to determine the sequences $u'$ for which $\fw(u') = \fw(u)$ for $t \leq 12$, and proving the next theorem for all $t \geq 3$.

\begin{thm}\label{mainth}
If $u$ is one of the sequences $(a b c)^{t}$, $a b c (a c b)^{t-1}$, $a b c (a c b)^{t-2} a b c$, $a b c a c b (a b c)^{t-3} a c b$, $a b c a c b (a b c)^{t-2}$, $a b c a b c (a c b)^{t-2}$, $(a b c)^{t-2} a c b a c b$, $(a b c)^{t-2} a c b a b c$, or $(a b c)^{t-1} a c b$, then $\fw(u) = \fw((a b)^t)$ and $\Ex(u, n) = n2^{\frac{1}{(t-2)!}\alpha(n)^{t-2}\pm O(\alpha(n)^{t-3})}$ for all $t\geq 3$.
\end{thm}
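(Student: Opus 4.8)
The plan is to separate the theorem into the formation-width statement $\fw(u) = 2t-1$ and the two-sided bound on $\Ex(u,n)$, the latter following almost formally from the former. Once $\fw(u) = 2t-1$ is established, the general inequality $\Ex(u, n) = O(\F_{\fl(u), \fw(u)}(n))$ from \cite{gpt}, combined with Nivasch's evaluation $\F_{r, 2t-1}(n) = n2^{\frac{1}{(t-2)!}\alpha(n)^{t-2} + O(\alpha(n)^{t-3})}$ --- valid for every $r \geq 2$ and $t \geq 3$, and applicable here since $\fl(u) \geq 2$ --- yields the upper bound on $\Ex(u,n)$. For the matching lower bound, each of the nine sequences contains $(ab)^t$ (delete every occurrence of $c$), so, exactly as in the derivation of the bound on $\Ex((12\ldots l)^t, n)$ from $\fw((12\ldots l)^t) = 2t-1$ in \cite{gpt}, the lower-bound construction of \cite{niv} for $\Ex((ab)^t, n)$ can be turned into a $3$-sparse $u$-avoiding sequence of length $n2^{\frac{1}{(t-2)!}\alpha(n)^{t-2} - O(\alpha(n)^{t-3})}$ by interspersing a bounded number of fresh symbols to restore $3$-sparsity. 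Hence everything reduces to showing $\fw(u) = 2t-1$ for each listed $u$ and all $t \geq 3$.

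The inequality $\fw(u) \geq 2t-1$ is immediate from monotonicity of $\fw$ under the containment order: if $v$ is contained in $u$ then every $(r, \fw(u))$-formation contains $u$ and hence $v$, so $\fw(v) \leq \fw(u)$; since each of the nine sequences contains $(ab)^t$, we get $\fw(u) \geq \fw((ab)^t) = 2t-1$, the $l = 2$ instance of $\fw((12\ldots l)^t) = 2t-1$. The content of the theorem is therefore the reverse inequality $\fw(u) \leq 2t-1$: for each $u$, exhibit an $r = r(t)$ such that every $(r, 2t-1)$-formation contains $u$.

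I would prove the upper bound by induction on $t$. The base case $t = 3$ asks only that $\fw = 5$ for the four sequences $abcabcabc$, $abcacbacb$, $abcacbabc$, $abcabcacb$ into which the nine families collapse when $t = 3$; this is certified by the algorithm of this paper and lies within the computational range $t \leq 12$. For the inductive step, observe that in each of the nine families $u_{t+1}$ is obtained from $u_t$ by inserting a single block equal to $abc$ or $acb$: at the very end for $(abc)^t$, $abc(acb)^{t-1}$, $abcacb(abc)^{t-2}$ and $abcabc(acb)^{t-2}$, and immediately before the final block (or immediately after the leading run of $abc$'s) for the remaining five. It therefore suffices to prove the block-insertion lemma $\fw(u_{t+1}) \leq \fw(u_t) + 2$; with the base case this forces $\fw(u_t) \leq 2t-1$, matching the lower bound. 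Beforehand, the invariance of $\fw$ under reversal and under relabeling of $\{a,b,c\}$ can be used to cut the nine families down to a shorter list of representatives.

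The block-insertion lemma is the heart of the argument and, I expect, the main obstacle. The natural approach is to take a $(r, \fw(u_t)+2)$-formation with $r$ large as a function of $t$, split it at the permutation where the new block should appear, use $\fw(u_t)$ of the permutations --- via the inductive hypothesis, suitably strengthened to also control where inside its last permutation the embedded copy of $u_t$ ends --- to locate a copy of $u_t$, and use the two spare permutations to realize the inserted block; a single permutation on $r \geq 3$ letters already contains both $abc$ and $acb$ as patterns, so two always suffice for the block itself. The delicate point is consistency of labels: the three letters playing the roles of $a, b, c$ are forced by the embedding of $u_t$, so one must be able to choose that embedding --- from among exponentially many --- so that the inserted block can be completed in the spare permutations using those very letters and in the correct relative order. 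Making this precise is where a Ramsey-type pigeonhole over triples of letters enters and where the bound on $r$ comes from; it is also the step that genuinely uses the special shape of the nine families, since for a general concatenation of $abc$/$acb$ blocks the estimate $\fw(u_{t+1}) \leq \fw(u_t) + 2$ need not hold, consistent with the computational finding that only some such words have formation width $2t-1$. (An alternative to the inductive route would be to argue that the dynamic program underlying the new algorithm becomes eventually periodic in $t$ on these families, but the induction seems cleaner to carry out.)
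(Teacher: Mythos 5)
Your reduction of the theorem to the single statement $\fw(u)=2t-1$ is sound and matches the paper's logic (the upper bound on $\Ex(u,n)$ via $\Ex(u,n)=O(\F_{\fl(u),\fw(u)}(n))$ and Nivasch's bound on $\F_{r,2t-1}(n)$, the lower bound because $u$ contains $(ab)^t$, and $\fw(u)\geq 2t-1$ by monotonicity of $\fw$ under containment). But the actual content of the theorem is the inequality $\fw(u)\leq 2t-1$, and there you have a genuine gap: your entire upper-bound argument rests on the block-insertion lemma $\fw(u_{t+1})\leq\fw(u_t)+2$, which you do not prove; you only sketch an approach and yourself flag its ``delicate point'' (label consistency of the embedding) as unresolved. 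The difficulty is worse than a missing detail. For five of the nine families the new block is inserted in the \emph{middle} of $u_t$, so after splitting the $(r,\fw(u_t)+2)$-formation you need the copy of $u_t$ guaranteed by the inductive hypothesis to break across the split point at exactly the right block boundary, with the two spare permutations sitting between the two halves and using the same three letters in the required order. The plain inductive hypothesis (``every $(r,\fw(u_t))$-formation contains $u_t$'') gives no control over where inside the formation the relevant block boundary of the embedded copy falls, and the strengthening you would need is essentially the original problem in disguise; invoking an unspecified ``Ramsey-type pigeonhole over triples'' with $r$ large does not close this. In short, the heart of the proof is missing.

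You also overlook the reduction (Theorem 2.1, from the original formation-width paper) that $\fw(u)$ for a $3$-letter $u$ is witnessed by \emph{binary} $(3,s)$-formations, i.e.\ concatenations of a single permutation $xyz$ and its reverse $zyx$. That reduction is what makes the paper's argument work and makes your large-$r$ Ramsey machinery unnecessary: the paper handles four of the nine families by citing earlier results ($\fw((abc)^t)$, $\fw(abc(acb)^t)$, $\fw(abc(acb)^tabc)$, $\fw(abcacb(abc)^tacb)$), and for the remaining five it gives a direct, non-inductive argument valid for all $t$ at once --- pigeonhole on the first $2t-1$ permutations of a binary $(3,2t+3)$-formation to extract $(xyz)^t$ or $(xyz)^{t+1}$ (resp.\ $(zyx)^{t-1}$), followed by a finite, computer-verified case analysis of the last four permutations (binary $(3,4)$-formations) showing how each case completes to the forbidden pattern. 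If you want to salvage your plan, the realistic fix is to abandon the induction on $t$ and instead work directly with binary $(3,2t-1)$-formations, where a pigeonhole-plus-finite-check argument of the paper's type becomes feasible.
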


\section{Algorithms for $\fw(u)$}

There are three algorithms for formation width including the new one, which we call \textit{BinaryFormation}, \textit{FormationTree}, and \textit{PermutationVector}.  As mentioned in the introduction, $\fw(u)$ is defined as the minimum value of $s$ for which there exists an $r$ such that every $(r, s)$-formation contains $u$. From the definition, it might seem like we need to check all integers $r > 0$ to know that there is no $(r, s)$-formation that contains $u$ for a given $s$. However, we showed in the original paper on $\fw(u)$ that only finitely many values of $r$ need to be checked for each $s$ \cite{gpt}. This gives the first algorithm for computing formation width.

\subsection{BinaryFormation}
To state the next result, we define an $(r, s)$-formation $f$ to be \emph{binary} if there exists a permutation $p$ on $r$ letters such that every permutation in $f$ is either equal to $p$ or the reverse of $p$.

\begin{thm}\cite{gpt}
For any sequence $u$ with $r$ distinct letters, $\fw(u)$ is the minimum value of $s$ for which every binary $(r, s)$-formation contains $u$. 
\end{thm}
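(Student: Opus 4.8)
The plan is to prove the two inequalities separately. Write $m$ for the minimum value of $s$ such that every binary $(r,s)$-formation contains $u$. Since every binary $(r,s)$-formation is in particular an $(r,s)$-formation, if every $(r,s)$-formation contains $u$ then certainly every binary one does; hence $m \le \fw(u)$ follows almost immediately from the definition of $\fw(u)$ together with the fact (cited from \cite{gpt}) that $r$, the number of distinct letters of $u$, is among the values that must be checked. The content is therefore the reverse inequality $\fw(u) \le m$: we must show that once every binary $(r,m)$-formation contains $u$, in fact every $(r',m)$-formation contains $u$ for a suitable $r'$ (and hence $\fw(u) \le m$).

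For the reverse direction I would argue by contrapositive: suppose $s < m$, so that there is some binary $(r,s)$-formation $f$ avoiding $u$; I want to produce, for \emph{every} $r'$, an $(r',s)$-formation avoiding $u$, which shows $\fw(u) > s$ and hence $\fw(u) \ge m$. The key step is a padding or blow-up construction. Since $f$ is binary, there is a permutation $p = p_1 p_2 \cdots p_r$ of the $r$ letters such that each block of $f$ is either $p$ or its reverse $\overline{p} = p_r \cdots p_1$. Given any $r' \ge r$, replace each letter $p_i$ of the alphabet by a block $B_i$ of new letters (with $B_1,\dots,B_r$ together using $r'$ letters), ordered so that in $p$ the concatenation $B_1 B_2 \cdots B_r$ is increasing and in $\overline p$ the concatenation $B_r \cdots B_1$ appears; since within each $B_i$ the relative order is the same in $p$ and $\overline p$, to keep the blocks \emph{permutations} one writes $B_i$ inside a $p$-block and the reverse of $B_i$ inside a $\overline p$-block. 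This yields an $(r',s)$-formation $f'$. One then checks that $f'$ still avoids $u$: any occurrence of $u$ in $f'$, projected onto the $r$ blocks (collapsing each $B_i$ to $p_i$), would have to hit at least... no — more carefully, because $u$ has only $r$ distinct letters and each $B_i$ contributes letters that behave identically across blocks of $f'$, an occurrence of $u$ in $f'$ would induce an occurrence of $u$ in $f$, a contradiction. (The case $r' < r$ is handled by noting that an $(r',s)$-formation avoiding $u$ can be obtained by restricting to any $r'$ of the $r$ letters, which only removes potential occurrences of $u$, so no occurrence is created.)

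The main obstacle I expect is the verification that the blow-up $f'$ genuinely avoids $u$ — specifically, ruling out occurrences of $u$ that use \emph{two or more} distinct letters from the same block $B_i$. The point to nail down is that, because inside each block $B_i$ the internal order is reversed exactly when the surrounding permutation is reversed, any two letters of $B_i$ alternate in $f'$ in lockstep with how $p_i$ "alternates with itself" — i.e.\ they never produce an $aba$ pattern that $p_i$ alone could not. So any subsequence of $f'$ isomorphic to $u$ must, after identifying the letters in each $B_i$, descend to a subsequence of $f$ isomorphic to $u$ (using that $u$ is a sequence on $r$ symbols and each of the $r$ blocks can supply at most the behavior of one symbol). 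Making this reduction precise — essentially a homomorphism argument from the letter set of $f'$ onto that of $f$ that is order-preserving within blocks — is the heart of the proof; the rest is bookkeeping about which permutations are $p$ and which are $\overline p$.
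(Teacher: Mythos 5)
This theorem is only quoted from \cite{gpt} in the present paper, so your argument has to stand on its own, and it has a genuine gap: the two inequalities have been conflated, and the hard one is never proved. Write $m$ for the minimum $s$ such that every binary $(r,s)$-formation contains $u$. Your contrapositive argument (take $s<m$, pick a binary $(r,s)$-formation $f$ avoiding $u$, and for every $r'$ produce an $(r',s)$-formation avoiding $u$ by restriction when $r'<r$ and by the blow-up when $r'>r$) shows that no $s<m$ can satisfy the definition of $\fw(u)$, i.e.\ it proves $\fw(u)\ge m$. But that is exactly the inequality $m\le\fw(u)$ that you dismissed at the outset as ``almost immediate'' --- and which indeed is not immediate, because the existential $r'$ in the definition of $\fw$ need not equal $r$; your appeal to ``$r$ is among the values that must be checked'' is circular, since that is essentially the statement being proved. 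Your blow-up does repair this direction, and the cleanest way to finish it is the one you gesture at: any occurrence of $u$ in the blow-up $f'$ uses at most $r$ distinct letters, and the restriction of $f'$ to those letters is isomorphic to $f$ itself (each blown-up permutation restricts to the induced order or its exact reverse, in the same forward/backward pattern as $f$), so $f$ would contain $u$.

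What is missing entirely is the other inequality, $\fw(u)\le m$: assuming every binary $(r,m)$-formation contains $u$, you must exhibit some $r'$ such that \emph{every} $(r',m)$-formation, binary or not, contains $u$. No manipulation of a single $u$-avoiding formation can give this; it requires a Ramsey-type step, which is how \cite{gpt} argues: by iterating the Erd\H{o}s--Szekeres theorem, for $r'$ sufficiently large in terms of $r$ and $m$, any sequence of $m$ permutations of $r'$ letters admits a set of $r$ letters on which each of the $m$ permutations is monotone with respect to a single common order, so every $(r',m)$-formation contains a binary $(r,m)$-formation as a subsequence and hence contains $u$. Without this reduction from arbitrary formations to binary ones, the theorem is not established, since a priori the minimum over $s$ in the definition of $\fw(u)$ could exceed $m$.
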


This theorem gives an obvious algorithm for computing $\fw(u)$:

\begin{enumerate}
\item Let $r$ be the number of distinct letters in $u$ and let $s = 0$.
\item Check if every binary $(r, s)$-formation contains $u$. If so, return $s$. If not, increment $s$ and repeat this step.
\end{enumerate}

In the worst case, BinaryFormation has exponential run time in the length of $u$, so it is not practical for computing formation width of long sequences. In \cite{gpt} we showed that $\fw(u) = t$ for any sequence $u$ of length $t+1$ with two distinct letters. For sequences $u$ with three distinct letters, no polynomial time algorithm was known for computing $\fw(u)$. 

\subsection{FormationTree}

In \cite{gtseq}, we found a faster algorithm to compute $\fw(u)$, which allowed us to find all sequences $u$ for which $\fw(u) = 4$ and $u$ contains $a b a b a$. This answered a question that we asked in \cite{gpt}. The faster algorithm uses the binary tree structure of binary $(r, s)$-formations, where the root of the tree is the binary $(r, 1)$-formation with first permutation $1 \dots r$, and the binary $(r, s)$-formation $f$ is the parent of $f 1 \dots r$ and $f r \dots 1$. The main new idea of this algorithm was that we do not have to check the descendants of a binary $(r, s)$-formation $f$ once we know that $f$ contains $u$. 

\begin{enumerate}
\item Let $r$ be the number of distinct letters in $u$ and let $s = 1$. If $u$ has length $0$, return $0$. 
\item Let Fset be the set that contains only the $(r, 1)$-formation with first permutation $1 \dots r$.
\item Check if every binary $(r, s)$-formation in Fset contains $u$. If so, return $s$. If not, construct two new binary $(r, s+1)$-formations $f 1 \dots r$ and $f r \dots 1$ for each $f$ in Fset that avoids $u$, let Fset be the set of newly constructed formations, increment $s$, and repeat this step.
\end{enumerate}

Although FormationTree is much faster than BinaryFormation in practice, in the worst case it still has exponential run time in the length of $u$. For example, it is easy to see immediately that $\fw((a b c)^t) = 2t-1$ using the pigeonhole principle for the upper bound and an alternating binary formation for the lower bound, but this is very slow to compute with the FormationTree algorithm since we do not find any binary $(3, s)$-formations that contain $(a b c)^t$ until $s = t$, at which point we are checking for containment of $(a b c)^t$ in $2^{t-1}$ binary formations.

\subsection{PermutationVector}

Given a sequence $u$ with $r$ distinct letters $1, \dots, r$ and length $n$, PermutationVector maintains a dynamic set of vectors, where each vector has $r!$ entries corresponding to permutations of the distinct letters of $u$ and each entry has a value between $1$ and $n$. Each round, we modify the set of vectors, and the output of $\fw(u)$ is the number of rounds it takes for the set of vectors to become empty. 

In this algorithm, we no longer keep track of any formations as we increment $s$, since there are exponentially many binary formations in terms of $s$. Instead, for each binary formation $f$ on the same letters as $u$, we define a vector $p_f(u)$ with an entry for each permutation $\pi$ of the distinct letters of $u$. If $v$ is the sequence obtained from applying $\pi$ to the letters of $u$, then the value of the $\pi$ entry of $p_f(u)$ is the maximum length of an initial segment of $v$ that is a subsequence of $f$. There are at most $n^{r!}$ possible vectors in any round of the algorithm. We can tell whether $f$ contains $u$ by checking if any of the entries of $p_f(u)$ is equal to the length of $u$. 

Given a $p_f(u)$ with no entries equal to the length of $u$, we can compute $p_{f 1 \dots r}(u)$: if $v$ is the sequence obtained from applying the permutation $\pi$ to the letters of $u$, we use the $\pi$ entry of $p_f(u)$ to determine the longest initial segment $v^{*}$ of $v$ that is a subsequence of $f$ and we let $v'$ be obtained from $v$ by removing $v^{*}$, and then we find the longest initial segment of $v'$ that is a subsequence of $1 \dots r$ and we add its length to the length of $v^{*}$ to get the $\pi$ entry of $p_{f 1 \dots r}(u)$. We also compute $p_{f r \dots 1}(u)$ similarly: we find the longest initial segment of $v'$ that is a subsequence of $r \dots 1$, and we add its length to the length of $v^{*}$ to obtain the $\pi$ entry of $p_{f r \dots 1}(u)$.

\begin{enumerate}
\item Let $r$ be the number of distinct letters in $u$ and let $s = 1$. If $u$ has length $0$, return $0$. 
\item Let Vset be the set that contains only $p_{1 \dots r}(u)$.
\item Check if every vector in Vset has an entry equal to the length of $u$. If so, return $s$. If not, delete any vectors with entries equal to the length of $u$. For any vector $p_f(u)$ with no entries equal to the length of $u$, construct the two new vectors $p_{f 1 \dots r}(u)$ and $p_{f r \dots 1}(u)$ as described, let Vset be the set of newly constructed permutation vectors, increment $s$, and repeat this step.
\end{enumerate}

At each value of $s$ from $1$ to $\fw(u)$, we only check and replace at most $n^{r!}$ vectors, each with $r!$ entries. Computing each entry of $p_{f 1 \dots r}(u)$ and $p_{f r \dots 1}(u)$ from the corresponding entry in $p_f(u)$ takes $O(n)$ run time per entry, since finding the longest initial segment of $v'$ that is a subsequence of $1 \dots r$ or $r \dots 1$ takes $O(r)$ run time, and adding its length to the length of $v^{*}$ takes $O(\log{n})$ run time. Thus the run time is $O(n^{r!+2})$ when $r = O(1)$.

\subsection{Comparison of FormationTree and PermutationVector}

We compared the run times of FormationTree and PermutationVector on sequences $u$ with three and four distinct letters. We implemented PermutationVector in Python \cite{pvpy}, and we ran this against the Python implementation of FormationTree from \cite{gtseq}. The computations were performed on an ASUS TUF Gaming FX504 with a 2.30GHz Intel i5-8300H CPU and 8GB of RAM, running Python 3.7.1 on Windows 10. We ran the two algorithms on sequences of the form $(a b c)^t$ and $(a b c d)^t$ for $1 \leq t \leq 10$. For each sequence, we performed twenty trials of each of the algorithms and computed the mean run time. FormationTree was faster for four of the sequences: $a b c$, $a b c a b c$, $a b c d$, and $a b c d a b c d$. For all other sequences, PermutationVector was faster. For computing $\fw((a b c)^{10})$, PermutationVector was almost $300$ times faster than FormationTree.

\begin{table}\label{uniform1000}
\begin{center}
\caption{FormationTree (FT) versus PermuationVector (PV): comparison of run time in seconds for sequences of the form $(a b c)^t$ and $(a b c d)^t$}
\begin{tabular}{ |c|c|c|c|c| } 
 \hline
Sequence & Mean (FT) & Mean (PV) \\
\hline
$(a b c)$  &  0.000004  &  0.000014  \\
$(a b c)^2$  &  0.000056  &  0.000067  \\
$(a b c)^3$  &  0.00035  &  0.000229  \\
$(a b c)^4$  &  0.001525  &  0.00061  \\
$(a b c)^5$  &  0.006406  &  0.001413  \\
$(a b c)^6$  &  0.026046  &  0.002916  \\
$(a b c)^7$  &  0.104506  &  0.005198  \\
$(a b c)^8$  &  0.403432  &  0.008571  \\
$(a b c)^9$  &  1.5426  &  0.013164  \\
$(a b c)^{10}$  &  6.197161  &  0.021925  \\

$(a b c d)$  &  0.000005  &  0.000065  \\
$(a b c d)^2$  &  0.000211  &  0.000316  \\
$(a b c d)^3$  &  0.001483  &  0.001117  \\
$(a b c d)^4$  &  0.007846  &  0.003879  \\
$(a b c d)^5$  &  0.038453  &  0.012018  \\
$(a b c d)^6$  &  0.171947  &  0.032365  \\
$(a b c d)^7$  &  0.752102  &  0.080562  \\
$(a b c d)^8$  &  3.189359  &  0.186831  \\
$(a b c d)^9$  &  13.48486  &  0.476468  \\
$(a b c d)^{10}$  &  60.345885  &  0.778753  \\
 \hline

\end{tabular}
\end{center}
\end{table}

\section{Applications}

We computed all sequences with $3$ distinct letters that have formation width $x$ and alternation length $x+1$ for each $x = 5, 6, 7, 8$ using PermutationVector. These sequences are in the appendices \ref{x = 5}, \ref{x = 6}, \ref{x = 7}, and \ref{x = 8}, all on the alphabet $0, 1, 2$ with the letters making first appearances in that order. Combining $\Ex(u, n) = O(\F_{\fl(u),\fw(u)}(n))$ with the bounds on $\F_{r, s}(n)$, we obtain the following theorem.

\begin{thm}
\begin{enumerate}
\item For all of the sequences $u$ in Appendix \ref{x = 5}, $\Ex(u, n) = \Theta(n 2^{\alpha(n)})$. 

\item For all of the sequences $u$ in Appendix \ref{x = 6}, $\Ex(u, n) = \Theta(n \alpha(n) 2^{\alpha(n)})$. 

\item For all of the sequences $u$ in Appendix \ref{x = 7}, $\Ex(u, n) = n 2^{\frac{\alpha(n)^2}{2}\pm O(\alpha(n))}$.

\item For all of the sequences $u$ in Appendix \ref{x = 8}, $\Ex(u, n) = O(n2^{\frac{\alpha(n)^{2}}{2}(\log{\alpha(n)}+O(1))})$ and $\Ex(u, n) = \Omega(n2^{\frac{\alpha(n)^2}{2} - O(\alpha(n))})$. 
\end{enumerate} 
\end{thm}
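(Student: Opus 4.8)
The plan is to prove each part by sandwiching $\Ex(u,n)$ between a lower bound coming from an alternation contained in $u$ and the formation-width upper bound $\Ex(u,n)=O(\F_{\fl(u),\fw(u)}(n))$, and then reading off the answer from the known asymptotics of $\F_{r,s}(n)$ and of $\Ex$ on alternations. For the lower bound, note that every $u$ under consideration has, by the way the appendices were generated, a longest alternation of length exactly $x+1$; in particular $u$ contains the two-letter alternation $A_{x+1}$ of length $x+1$, so every $r$-sparse sequence avoiding $A_{x+1}$ avoids $u$, giving $\Ex(u,n)\ge\Ex(A_{x+1},n)$. Substituting the known sharp bounds $\Ex(A_6,n)=\Ex((ab)^3,n)=\Theta(n2^{\alpha(n)})$, $\Ex(A_7,n)=\Ex(abababa,n)=\Theta(n\alpha(n)2^{\alpha(n)})$, $\Ex(A_8,n)=\Ex((ab)^4,n)=n2^{\frac{\alpha(n)^2}{2}\pm O(\alpha(n))}$, and $\Ex(A_9,n)=\Ex((ab)^4a,n)=n2^{\frac{\alpha(n)^2}{2}\pm O(\alpha(n))}$ produces the lower bounds claimed in all four parts.

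For the upper bounds I would use that $\fw(u)=x$ (again by construction of the appendices) and that $\fl(u)\ge 3$, since no formation on fewer than three letters can contain a three-letter sequence. When $x$ is odd — $x=5=2\cdot3-1$ in part 1 and $x=7=2\cdot4-1$ in part 3 — the estimate $\F_{r,x}(n)=n2^{\frac{1}{(t-2)!}\alpha(n)^{t-2}\pm O(\alpha(n)^{t-3})}$ holds for every $r\ge2$, so $\Ex(u,n)=O(\F_{\fl(u),x}(n))$ agrees with the alternation lower bound up to the error term in the exponent, which closes parts 1 and 3. When $x=8=2\cdot4$, since $\fl(u)\ge3$ and $t=4\ge3$ we obtain $\F_{\fl(u),8}(n)=n2^{\frac{\alpha(n)^2}{2}(\log\alpha(n)\pm O(1))}$, which is exactly the upper bound asserted in part 4; its lower bound is the alternation bound above, and the gap is the one intrinsic to $\F_{r,2t}(n)$ for $r,t\ge3$.

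The case $x=6$ is the crux, and I expect it to be the main obstacle. Here $\fw(u)=6=2\cdot3$ with $\fl(u)\ge3$, so the generic inequality only yields $\Ex(u,n)=O(\F_{\fl(u),6}(n))=O(n2^{\alpha(n)(\log\alpha(n)+O(1))})$, which is strictly weaker than the $O(n\alpha(n)2^{\alpha(n)})$ needed to match $\Omega(n\alpha(n)2^{\alpha(n)})$ and conclude $\Ex(u,n)=\Theta(n\alpha(n)2^{\alpha(n)})$. To close this gap I would treat the finitely many sequences of Appendix \ref{x = 6} individually: for each such $u$, exhibit a sequence $w$ with $u$ contained in $w$ and with $\Ex(w,n)=\Theta(\Ex(abababa,n))=\Theta(n\alpha(n)2^{\alpha(n)})$ already established in prior work on three-letter forbidden patterns, so that $\Ex(u,n)\le\Ex(w,n)=O(n\alpha(n)2^{\alpha(n)})$. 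The difficulty is precisely this step — verifying that every member of Appendix \ref{x = 6} is dominated by such a $w$ whose extremal function is pinned to $\Theta(n\alpha(n)2^{\alpha(n)})$, i.e., that introducing a third letter in any way compatible with $\fw(u)=6$ and alternation length $7$ cannot inflate the extremal function past that of $abababa$ — and the explicit list produced by PermutationVector is what makes this finite check feasible.
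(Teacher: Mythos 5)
Your treatment of parts 1, 3, and 4 is exactly the paper's argument: the paper proves the whole theorem with the single observation that $\Ex(u,n)=O(\F_{\fl(u),\fw(u)}(n))$ combined with the quoted asymptotics of $\F_{r,s}(n)$, the matching lower bounds coming from the alternation of length $x+1$ contained in each listed $u$ together with the known values of $\Ex((ab)^3,n)$, $\Ex((ab)^4,n)$, and $\Ex((ab)^ta,n)$. For those three parts your proposal is correct and follows the same route.

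Part 2 is where your proposal has a genuine gap: the entire weight of that case is placed on the step ``for each $u$ in Appendix \ref{x = 6}, exhibit a supersequence $w$ containing $u$ with $\Ex(w,n)=\Theta(n\alpha(n)2^{\alpha(n)})$ already established,'' and you never produce such a $w$ for even one sequence, nor does the cited literature supply a family to draw from: the sharp three-letter results obtained via formation width are either $\Theta(n\alpha(n))$ (formation width $4$) or of the form $n2^{\mathrm{poly}(\alpha(n))}$ coming from odd formation widths, and the two-letter pattern $abababa$ cannot contain a sequence with three distinct letters, so the needed dominating patterns are not available off the shelf. As written, the proposal therefore does not prove part 2. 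It is worth saying plainly that the paper does not carry out the per-sequence argument you envision either; its proof of part 2 is the same one-line combination as the other parts, even though, as you correctly compute, $\fl(u)\ge 3$ and the paper's own quoted bound $\F_{r,6}(n)=n2^{\alpha(n)(\log\alpha(n)\pm O(1))}$ for $r\ge 3$ only yield $\Ex(u,n)=O(n2^{\alpha(n)(\log\alpha(n)+O(1))})$, which is weaker than the stated $\Theta(n\alpha(n)2^{\alpha(n)})$. So your instinct that extra work is required for the $x=6$ case is sound and identifies a real mismatch between the claimed bound and the tools invoked, but your proposal does not close that gap any more than the one-line argument does.
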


In addition, we computed $\fw(u)$ for the more restricted family of sequences $u$ obtained by concatenating copies of $a b c$ and $a c b$, where the first permutation is $a b c$. We used PermutationVector to find all such sequences $u$ that have formation width $x$ and alternation length $x+1$ for $x = 5, 7, 9, \dots, 23$. They are listed in Appendix \ref{abcform}. 

For each $x = 2t+5 \geq 11$, there are nine sequences $u$ that have formation width $x$ and alternation length $x+1$ such that $u$ is obtained by concatenating copies of $a b c$ and $a c b$, where the first permutation is $a b c$. These sequences are $(a b c)^{t+3}$, $a b c (a c b)^{t+2}$, $a b c (a c b)^{t+1} a b c$, $a b c a c b (a b c)^{t} a c b$, $a b c a c b (a b c)^{t+1}$, $a b c a b c (a c b)^{t+1}$, $(a b c)^{t+1} a c b a c b$, $(a b c)^{t+1} a c b a b c$, and $(a b c)^{t+2} a c b$.

For $x = 5$ and $x = 7$, there are four and eight sequences respectively (all possible sequences obtained by concatenating copies of $a b c$ and $a c b$, where the first permutation is $a b c$). Interestingly, for $x = 9$ there are ten such sequences, even though there are nine for $x > 9$. Nine of the sequences for $x = 9$ have the same form as the nine sequences for each $x > 9$; the extra sequence for $x = 9$ is $a b c a b c a c b a c b a b c$.

In \cite{gpt}, we showed that $\fw((a b c)^t) = 2t-1$ and $\fw(a b c (a c b)^{t}) = 2t+1$ for $t \geq 0$. Furthermore in \cite{GT1} we proved that $\fw(a b c (a c b)^{t} a b c) = 2t+3$ and $\fw(a b c a c b (a b c)^t a c b) = 2t+5$ for $t \geq 0$. We use a computational method to handle the five other cases that appear for each $x > 9$ in Appendix \ref{abcform}, automating much of the proof with a Python script \cite{acbpy}. Also we note that the next result implies Theorem \ref{mainth} as a corollary.

\begin{prop}\label{halfmhalfc}
If $t \geq 1$ and $u$ is one of the sequences $a b c a c b (a b c)^{t}$, $a b c a b c (a c b)^{t}$, $(a b c)^{t} a c b a c b$, $(a b c)^{t} a c b a b c$, or $(a b c)^{t+1} a c b$, then $\fw(u) = 2t+3$.
\end{prop}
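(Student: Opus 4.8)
The plan is to prove $\fw(u) = 2t+3$ for each of the five families by establishing matching upper and lower bounds. For the lower bound, I would exhibit, for every $r$, a binary $(r, 2t+2)$-formation that avoids $u$; by the characterization theorem of \cite{gpt}, this shows $\fw(u) \geq 2t+3$. The natural candidate is the alternating binary formation $f = (1\dots r)(r \dots 1)(1 \dots r)\cdots$ of length $2t+2$ on $r$ letters (for a suitable choice of $r$, likely $r=3$ or a small constant suffices, since $u$ has only $3$ distinct letters and $\fl$ of the related sequences is small). One checks that the longest subsequence of $f$ isomorphic to a prefix of any permuted copy of $u$ has length strictly less than $|u|$: because $u$ is built by concatenating $abc$'s and $acb$'s, its prefixes of "blocks" alternate between at most two permutation-patterns, and an alternating formation of length $2t+2$ can only realize $t+1$ such blocks of each orientation. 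This is essentially the same lower-bound construction already used in \cite{gpt} for $(abc)^t$ and $abc(acb)^t$, so it should transfer with minor bookkeeping.

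For the upper bound I would show every binary $(r, 2t+3)$-formation contains $u$, for the appropriate $r=\fl(u)$. Here I would run, or rather symbolically simulate, the PermutationVector algorithm: since $u$ has $3$ distinct letters, there are only $3! = 6$ permutation-entries to track, and a binary $(r,s)$-formation is determined by its sign pattern (which permutation is reversed at each step). The key structural observation is that when one feeds a block $1\dots r$ or $r\dots 1$ into the vector $p_f(u)$, each entry advances past exactly the maximal increasing (resp. decreasing) run of the remaining suffix of the corresponding permuted copy of $u$; because each "letter-block" of $u$ is a copy of $abc$ or $acb$, these runs have length controlled by whether the incoming orientation agrees with that block. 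I would set up a finite automaton whose states record, for each of the six permuted copies, which block of $u$ we are currently inside and how far; the transitions are the two formation-moves. Since $u$ is periodic up to a bounded prefix/suffix, this automaton has $O(1)$ states independent of $t$, and I would argue that along every infinite path the "completion time" (number of blocks of $f$ needed to fully embed some permuted copy) is at most $2t+3$. This reduces the claim, for all $t$, to checking a handful of base cases plus a monotone inductive step, which is exactly what the Python script \cite{acbpy} is doing.

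Concretely, the induction would go: let $u_t$ denote one of the five families. I would relate $u_{t+1}$ to $u_t$ by peeling off a single $abc\,acb$ or $acb\,abc$ pair from the periodic part, show that any binary formation of length $2(t+1)+3$ has a length-$(2t+3)$ sub-formation (a contiguous block-interval) forcing a permuted copy of $u_t$ by the inductive hypothesis, and then argue the remaining two permutation-blocks of the formation suffice to extend that embedding through the extra period of $u_{t+1}$. The matching of orientations is where one must be careful: an $abc$-block needs an incoming $1\dots r$ to advance two letters but only one if the incoming block is $r\dots 1$, so the adversary's best strategy is to alternate orientations, and one shows alternation still loses by step $2t+3$. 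The five families differ only in a bounded prefix/suffix ($abc\,acb$ vs.\ $abc\,abc$ vs.\ trailing $acb\,acb$ etc.), so after the main periodic argument they are dispatched by finitely many explicit checks, handled uniformly by the script.

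The main obstacle I anticipate is the upper bound's orientation bookkeeping: one must verify that \emph{no} sign pattern of length $2t+2$ escapes, not just the alternating one, and the "greedy advance" of PermutationVector entries can interact non-obviously when a permuted copy of $u$ has its blocks out of phase with $f$'s blocks. Making the finite-automaton reduction rigorous — proving the state set is genuinely bounded and the embedding-completion time is monotone in a way that licenses the induction on $t$ — is the delicate part; the base cases and the five-way case split are then routine and are precisely what the accompanying computer verification discharges.
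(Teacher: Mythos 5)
Your lower bound is fine (it is the standard ``contains the alternation $(ab)^{t+2}$'' argument, and indeed the paper dismisses it as immediate), but the upper bound as you describe it has a genuine gap, and it sits exactly at the step you yourself flag as delicate. Your induction peels one period off $u_{t+1}$, applies the inductive hypothesis to a contiguous $(3,2t+3)$-sub-formation of a binary $(3,2t+5)$-formation $f$, and then tries to extend the resulting embedding through the two remaining permutation blocks. This does not work as stated: the inductive hypothesis only gives you \emph{some} permuted copy of $u_t$, with no control over which of the six letter-orderings it uses or where inside the sub-formation it sits, and two extra blocks of a binary formation cannot realize an arbitrary extra period of an arbitrarily permuted copy (e.g.\ $xyz\,zyx$ does not even contain $yxz$ as a subsequence). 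Moreover, for families such as $(abc)^t acb\,acb$ the peeled period lies in the middle of the pattern, so the ``extend at the end'' picture is not even the right shape. The same uncontrolled-orientation problem undermines the $O(1)$-state automaton claim: you assert that the completion time along every path is at most $2t+3$, but proving that is the whole theorem, and nothing in your sketch rules out the adversarial sign patterns where the greedy embedding of one permuted copy stalls and one must switch to a different copy.

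The paper's proof resolves precisely this point by a different mechanism: given a binary $(3,2t+3)$-formation on $xyz$ and $zyx$, it first pigeonholes the initial $2t-1$ permutations to extract a homogeneous prefix $(xyz)^t$ (or, in the second case, both $(xyz)^t$ and $(zyx)^{t-1}$), and then reduces the problem to a \emph{finite} check on the last four permutations ($12$ letters), verified by the script in \cite{acbpy}. The essential idea you are missing is what happens for the exceptional tails that fail the finite check (such as $zyx\,xyz\,zyx\,xyz$): there the proof abandons the $(xyz)^t$-based embedding entirely and re-embeds $u$ using a cyclically shifted copy, e.g.\ $(zxy)^t$ or $(xzy)^t$, whose availability is guaranteed by the pigeonholed prefix and, in some subcases, by inspecting whether permutation $1$ or permutation $2t-1$ of $f$ is $xyz$ or $zyx$. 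Without this copy-switching step (or some substitute for it), neither your induction nor your automaton argument closes, so the proposal remains a plan rather than a proof.
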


\begin{proof}
The lower bound $\fw(u) \geq 2t+3$ is immediate, so we prove that $\fw(u) \leq 2t+3$. By symmetry, it suffices to prove for $t \geq 1$ that $\fw(u) \leq 2t+3$ for each sequence $u = (a b c)^{t}b a c a b c$, $(a b c)^{t} b a c b a c$, $(a b c)^{t} a c b a c b$, $(a b c)^{t} a c b a b c$, or $(a b c)^{t+1} a c b$. Let $f$ be any binary $(3, 2t+3)$-formation with permutations $x y z$ and $z y x$. Then the first $2t-1$ permutations of $f$ have the subsequence $(x y z)^t$ or $(z y x)^t$. Without loss of generality, suppose the first $2t-1$ permutations of $f$ have the subsequence $(x y z)^t$. 

We consider two cases. For the first case, suppose that the first $2t-1$ permutations of $f$ have the subsequence $(x y z)^{t+1}$. This immmediately implies that $f$ has the subsequences $(x y z)^t x z y x z y$, $(x y z)^t x z y x y z$, and $(x y z)^{t+1} x z y$. Moreover, $f$ has the subsequence $(x y z)^t y x z x y z$ unless its last $12$ letters are $z y x x y z z y x z y x$, in which case it has the subsequence $(y z x)^t z y x y z x$. Furthermore, $f$ has the subsequence $(x y z)^t y x z y x z$ unless its last $12$ letters are $z y x x y z x y z z y x$, in which case it has the subsequence $(y z x)^t z y x z y x$. Thus we have shown that any binary $(3, 2t+3)$-formation that has the subsequence $(x y z)^{t+1}$ in its first $2t-1$ permutations contains $u$ for each  $u = (a b c)^{t}b a c a b c$, $(a b c)^{t} b a c b a c$, $(a b c)^{t} a c b a c b$, $(a b c)^{t} a c b a b c$, or $(a b c)^{t+1} a c b$.

For the second case, suppose that the first $2t-1$ permutations of $f$ have the subsequence $(x y z)^t$ and the subsequence $(z y x)^{t-1}$. We handle each sequence with a simple computation \cite{acbpy} in Python described below. 

\begin{enumerate}

\item For $u = (a b c)^{t}b a c a b c$, we use \cite{acbpy} to verify that every binary $(3, 4)$-formation on the permutations $xyz$ and $zyx$ except for $xyzxyzzyxzyx$ has at least one of the following subsequences: $y x z x y z$, $x z y x y z x$, $x y x z y z x y$, $z y x y z x z y x$, $z y x z x y z y x z$, $z y x z y z x y x z y$. So we may assume that the last $12$ letters of $f$ are $xyzxyzzyxzyx$. If permutation $2t-1$ is $zyx$, then $f$ has the subsequence $(x y z)^t y x z x y z$. If permutation $2t-1$ is $x y z$, then $f$ has the subsequence $(z y x)^t y z x z y x$.

\item For $u = (a b c)^t b a c b a c$, we use \cite{acbpy} to verify that every binary $(3, 4)$-formation on the permutations $xyz$ and $zyx$ except for $xyz zyx xyz xyz$ has at least one of the following subsequences: $y x z y x z$, $x z y x z y x$, $x y x z y x z y$, $z y x y z x y z x$, $z y x z x y z x y z$, $z y x z y z x y z x y$. So we may assume that the last $12$ letters of $f$ are $xyz zyx xyz xyz$. If permutation $1$ is $zyx$, then $f$ has the subsequence $(z x y)^t x z y x z y$. If permutation $2t-1$ is $z y x$, then $f$ has the subsequence $(x y z)^t y x z y x z$. If permutations $1$ and $2t-1$ are both $x y z$, then $f$ has the subsequence $(x z y)^t z x y z x y$.

\item For $u = (a b c)^t a c b a c b$, we use \cite{acbpy} to verify that every binary $(3, 4)$-formation on the permutations $xyz$ and $zyx$ except for $zyx xyz zyx xyz$ and $zyx xyz xyz zyx$ has at least one of the following subsequences: $x z y x z y$, $x y x z y x z$, $x y z y x z y x$, $z y x z x y z x y$, $z y x z y z x y z x$, $z y x z y x y z x y z$. So we may assume that the last $12$ letters of $f$ are $zyx xyz zyx xyz$ or $zyx xyz xyz zyx$. If permutation $1$ is $zyx$, then in both cases $f$ has the subsequence $(z x y)^t z y x z y x$. If permutation $1$ is $xyz$, then in both cases $f$ has the subsequence $(x z y)^t x y z x y z$. 

\item For $u = (a b c)^t a c b a b c$, we use \cite{acbpy} to verify that every binary $(3, 4)$-formation on the permutations $xyz$ and $zyx$ except for $zyx xyz zyx zyx$ has at least one of the following subsequences: $x z y x y z$, $x y x z y z x$, $x y z y x z x y$, $z y x z x y z y x$, $z y x z y z x y x z$, $z y x z y x y z x z y$. So we may assume that the last $12$ letters of $f$ are $zyx xyz zyx zyx$. If permutation $1$ is $zyx$, then $f$ has the subsequence $(z x y)^t z y x z x y$. If permutation $1$ is $x y z$, then $f$ has the subsequence $(x z y)^t x y z x z y$.

\item For $u = (a b c)^{t+1} a c b$, we use \cite{acbpy} to verify that every binary $(3, 4)$-formation on the permutations $xyz$ and $zyx$ except for $zyx zyx xyz zyx$ and $zyx xyz zyx xyz$ has at least one of the following subsequences: $x y z x z y$, $x y z x y x z$, $x y z x y z y x$, $z y x z y x z x y$, $z y x z y x z y z x$, $z y x z y x z y x y z$. So we may assume that the last $12$ letters of $f$ are $zyx zyx xyz zyx$ or $zyx xyz zyx xyz$. If permutation $1$ is $zyx$, then in both cases $f$ has the subsequence $(z x y)^{t+1} z y x$. If permutation $1$ is $x y z$, then in both cases $f$ has the subsequence $(x z y)^{t+1} x y z$.

\end{enumerate}
\end{proof}

We finish with a conjecture based on Proposition \ref{halfmhalfc} and the evidence in Appendix \ref{abcform}.

\begin{conj}\label{acbconj}
Suppose that $t \geq 6$. Among $(3, t)$-formations $u$ with first permutation $a b c$ and other permutations equal to $a b c$ or $a c b$, $\fw(u) = 2t-1$ if and only if $u$ is $(a b c)^{t}$, $a b c (a c b)^{t-1}$, $a b c (a c b)^{t-2} a b c$, $a b c a c b (a b c)^{t-3} a c b$, $a b c a c b (a b c)^{t-2}$, $a b c a b c (a c b)^{t-2}$, $(a b c)^{t-2} a c b a c b$, $(a b c)^{t-2} a c b a b c$, or $(a b c)^{t-1} a c b$.
\end{conj}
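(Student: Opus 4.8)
We first record that the ``if'' direction is already available from results quoted in the paper: every $(3,t)$-formation $u$ with first permutation $abc$ and remaining permutations in $\{abc,acb\}$ contains the alternation $(ab)^{t}$, hence $\fw(u)\ge \fw((ab)^{t})=2t-1$; and the matching upper bounds $\fw(u)\le 2t-1$ for the nine listed sequences are, after substituting a suitable value of the parameter, exactly Proposition~\ref{halfmhalfc} together with $\fw((abc)^{t})=2t-1$ and $\fw(abc(acb)^{t})=2t+1$ from \cite{gpt} and $\fw(abc(acb)^{t}abc)=2t+3$ and $\fw(abc\,acb(abc)^{t}acb)=2t+5$ from \cite{GT1}. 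So the whole content of Conjecture~\ref{acbconj} is the ``only if'' direction: every such $u$ that is \emph{not} one of the nine listed sequences satisfies $\fw(u)\ge 2t$. By the Binary Formation Theorem (applied with $r=3$), this amounts to producing, for each such $u$, a single binary $(3,2t-1)$-formation that avoids $u$.

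Encode $u$ by the word $w\in\{0,1\}^{t-1}$ whose $i$-th letter records whether permutation $i+1$ of $u$ is $abc$ (letter $0$) or $acb$ (letter $1$). The structural observation that should drive the argument is that each of the nine exceptional words is constant on the coordinate block $\{3,\dots,t-3\}$ --- which is exactly why the hypothesis $t\ge 6$ is needed for the block to be nonempty and disjoint from the four ``boundary'' coordinates. Consequently it suffices to treat two cases: (i) words $w$ that are non-constant on $\{3,\dots,t-3\}$, which are automatically non-exceptional; and (ii) words $w$ that are constant on $\{3,\dots,t-3\}$ but are still not one of the nine.

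For case (i), a non-constant middle forces two consecutive permutations $abc\,acb$ or $acb\,abc$ of $u$ located strictly inside $u$, away from both ends. The plan is to build the avoiding binary $(3,2t-1)$-formation by starting from the alternating binary formation $xyz\,zyx\,xyz\,\cdots$ and inserting a bounded number of ``defects'' (repeated consecutive permutations) aligned with that internal transition: since $(ab)^{t}$ embeds into the alternating $(3,2t-1)$-formation in essentially a unique way, the defect can be arranged so that no extension of such an embedding also picks up the forced $c$ at the transition. The verification that the embedding is blocked is confined to an $O(1)$-length window around the defect and has the same flavor as the Python-assisted window checks in the proof of Proposition~\ref{halfmhalfc}. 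For case (ii), since $w$ is constant on $\{3,\dots,t-3\}$ it is determined by its constant middle symbol $\sigma$ together with its first two and last two coordinates, so there are only $O(1)$ ``shapes'' $w=w'\sigma^{\,t-5}w''$ with $|w'|=|w''|=2$. For each fixed shape the plan is to prove a pumping identity $\fw\!\bigl(u_{w'\sigma^{k}w''}\bigr)=2k+c(w',\sigma,w'')$ for all large $k$, where the upper bound is the pigeonhole argument of Proposition~\ref{halfmhalfc} relativized to the long $\sigma$-run (it forces a one-sided subsequence $(abc)^{k}$ or $(acb)^{k}$ from that run, leaving only an $O(1)$-sized tail at each end for a finite case analysis), and the lower bound is obtained by splicing two alternating permutations into the middle of an avoiding binary formation built for a small value of $k$, once per unit increase of $k$, and checking the spliced formation still avoids the lengthened $u$. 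The constants $c$ for the finitely many shapes can then be read off from the data in Appendix~\ref{abcform} (which already verifies the conjecture for $t\le 12$), and one finds $c$ gives $\fw=2t-1$ precisely for the nine shapes matching the exceptional words and $\fw\ge 2t$ for all the others.

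The step I expect to be the main obstacle is the pumping identity in case (ii) --- in particular, simultaneously showing that splicing alternating permutations into the middle of an avoiding formation still avoids the correspondingly lengthened $u$ (the lower bound) and that the pigeonhole upper bound degrades by \emph{exactly} $2$ per spliced pair and not less, so that the non-exceptional shapes genuinely acquire $c\ge 8$ rather than $c=7$. The slogan ``a long constant run contributes slope exactly $2$'' is intuitively forced by the $(ab)^{t}$ picture, but making it uniform across all the $O(1)$ shapes, including the bookkeeping for how a defect at an end interacts with the long middle run, is where the real work lies; by contrast the case~(i) perturbed-alternating constructions and the small-$k$ base computations should be comparatively routine.
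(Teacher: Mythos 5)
The statement you are addressing is Conjecture~\ref{acbconj}: the paper itself offers no proof of it, only the proven ``if'' direction (which, as you correctly note, follows from Proposition~\ref{halfmhalfc} together with the values of $\fw((abc)^{t})$, $\fw(abc(acb)^{t})$, $\fw(abc(acb)^{t}abc)$ and $\fw(abc\,acb(abc)^{t}acb)$ from \cite{gpt} and \cite{GT1}) plus machine verification for the small cases listed in Appendix~\ref{abcform}. Your write-up correctly isolates the real content --- exhibiting, for every non-exceptional $u$, a binary $(3,2t-1)$-formation on $\{x,y,z\}$ avoiding $u$ --- but it does not supply that content. Both pillars of your plan are unproven: in case (i) you assert that a ``defect'' inserted into the alternating formation blocks every embedding of $u$ and that this can be checked in an $O(1)$ window around the defect, but containment of $u$ in a binary formation may use letters spread across the whole formation, so the reduction to a local window itself requires an argument (this is exactly the kind of global bookkeeping that the proof of Proposition~\ref{halfmhalfc} has to do by splitting on where the $(xyz)^{t}$ copy sits); in case (ii) the pumping identity $\fw(u_{w'\sigma^{k}w''})=2k+c$, and in particular the claim that the width grows by exactly $2$ per lengthening of the constant run (rather than sometimes by $1$, which would collapse the distinction between exceptional and non-exceptional shapes), is precisely the open question restated in different language, and you flag it yourself as the main obstacle. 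Note also that the constants $c$ for non-exceptional shapes cannot simply be ``read off'' from Appendix~\ref{abcform}, since the appendix lists only the sequences achieving $\fw(u)=2t-1$ and gives no exact widths for the others.

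So the proposal is a reasonable research plan, and its framing (encode $u$ as a word in $\{0,1\}^{t-1}$, observe the nine exceptional words are determined by two boundary coordinates on each side and a constant middle, then prove a pumping/slope statement for long constant runs) is a sensible way to attack the conjecture; but as it stands it has genuine gaps at exactly the steps that constitute the conjecture, and it neither proves the statement nor matches any argument in the paper, because the paper has none.
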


For any fixed $r$, PermutationVector has worst-case run time polynomial in $n$ for all sequences of length $n$ with at most $r$ letters, where the degree of the polynomial depends on $r$. However, PermutationVector can be slower than FormationTree when $n$ is not sufficiently large relative to $r$, as we saw with the sequences $abc$, $abcd$, $abcabc$, and $abcdabcd$. Clearly PermutationVector does not have run time polynomial in $n$ when $r = \Theta(n)$, since the vectors in the algorithm each have $r!$ entries.

\begin{ques}
Is there an algorithm for computing $\fw(u)$ that has worst-case run time polynomial in $n$ for all sequences $u$ of length $n$, regardless of the number of distinct letters in $u$?
\end{ques}

\appendix

\section{$3$-letter sequences with formation width $5$ and alternation length $6$}\label{x = 5}

0101012,
0101021,
0101201,
0102101,
0120101,
0120202,
0121212,
01012012,
01012021,
01012201,
01021012,
01021021,
01021201,
01022101,
01201012,
01201021,
01201201,
01201202,
01201212,
01202012,
01202021,
01202101,
01202102,
01210202,
01210212,
01212012,
01212102,
01220101,
012012012,
012012021,
012012102,
012021012,
012021021,
012021102,
012102012,
012102102,
012201021

\section{$3$-letter sequences with formation width $6$ and alternation length $7$}\label{x = 6}

01010102,
01010120,
01010210,
01012010,
01021010,
01201010,
01202020,
01212121,
010102102,
010102120,
010102210,
010120102,
010120120,
010120210,
010122010,
010210102,
010210120,
010210210,
010212010,
010221010,
012010102,
012010120,
012010210,
012012010,
012012020,
012012121,
012020120,
012020210,
012021010,
012021020,
012102020,
012102121,
012120121,
012121021,
012121210,
012201010,
0102102102,
0102102120,
0102120102,
0102120120,
0102210120,
0120102102,
0120102120,
0120102210,
0120120102,
0120120120,
0120120121,
0120120210,
0120121020,
0120121021,
0120121210,
0120210120,
0120210210,
0120211020,
0121020120,
0121020121,
0121021020,
0121021021,
0121210021,
0121210210,
0122010102,
0122010210

\section{$3$-letter sequences with formation width $7$ and alternation length $8$} \label{x = 7}

010101012,
010101021,
010101201,
010102101,
010120101,
010210101,
012010101,
012020202,
012121212,
0101012012,
0101012021,
0101012201,
0101021012,
0101021021,
0101021201,
0101022101,
0101201012,
0101201021,
0101201201,
0101202101,
0101220101,
0102101012,
0102101021,
0102101201,
0102102101,
0102120101,
0102210101,
0120101012,
0120101021,
0120101201,
0120102101,
0120120101,
0120120202,
0120121212,
0120201202,
0120202012,
0120202021,
0120202102,
0120210101,
0120210202,
0121020202,
0121021212,
0121201212,
0121210212,
0121212012,
0121212102,
0122010101,
01012012012,
01012012021,
01012021012,
01012021021,
01012201021,
01021012012,
01021012021,
01021012201,
01021021012,
01021021021,
01021021201,
01021201021,
01021201201,
01022101012,
01022101201,
01201012012,
01201012021,
01201012201,
01201021012,
01201021021,
01201021201,
01201022101,
01201201012,
01201201021,
01201201201,
01201201202,
01201201212,
01201202012,
01201202021,
01201202101,
01201202102,
01201210202,
01201210212,
01201212012,
01201212102,
01202012012,
01202012021,
01202012102,
01202021012,
01202021021,
01202021102,
01202101012,
01202101021,
01202101201,
01202101202,
01202102012,
01202102021,
01202102101,
01202102102,
01202110202,
01210201202,
01210201212,
01210202012,
01210202102,
01210210202,
01210210212,
01210212012,
01210212102,
01212012012,
01212012102,
01212100212,
01212102012,
01212102102,
01220101021,
01220101201,
01220102101,
012012012012,
012012012021,
012012012102,
012012021012,
012012021021,
012012021102,
012012102012,
012012102102,
012021012012,
012021012021,
012021021012,
012021021021,
012102012012,
012102012102,
012102102012,
012102102102,
012201021021

\section{$3$-letter sequences with formation width $8$ and alternation length $9$} \label{x = 8}

010101010,
0101010102,
0101010120,
0101010210,
0101012010,
0101021010,
0101201010,
0102101010,
0120101010,
0120202020,
0121212121,
01010102102,
01010102120,
01010102210,
01010120102,
01010120120,
01010120210,
01010122010,
01010210102,
01010210120,
01010210210,
01010212010,
01010221010,
01012010102,
01012010120,
01012010210,
01012012010,
01012021010,
01012201010,
01021010102,
01021010120,
01021010210,
01021012010,
01021021010,
01021201010,
01022101010,
01201010102,
01201010120,
01201010210,
01201012010,
01201021010,
01201201010,
01201202020,
01201212121,
01202012020,
01202020120,
01202020210,
01202021020,
01202101010,
01202102020,
01210202020,
01210212121,
01212012121,
01212102121,
01212120121,
01212121021,
01212121210,
01220101010,
010102102102,
010102102120,
010102120102,
010102120120,
010102210120,
010120102102,
010120102120,
010120102210,
010120120102,
010120120120,
010120120210,
010120210120,
010120210210,
010122010102,
010122010210,
010210102102,
010210102120,
010210102210,
010210120102,
010210120120,
010210120210,
010210122010,
010210210102,
010210210120,
010210210210,
010210212010,
010212010102,
010212010120,
010212010210,
010212012010,
010221010120,
010221010210,
010221012010,
012010102102,
012010102120,
012010102210,
012010120102,
012010120120,
012010120210,
012010122010,
012010210102,
012010210120,
012010210210,
012010212010,
012010221010,
012012010102,
012012010120,
012012010210,
012012012010,
012012012020,
012012012121,
012012020120,
012012020210,
012012021010,
012012021020,
012012102020,
012012102121,
012012120121,
012012121021,
012012121210,
012020120120,
012020120210,
012020121020,
012020210120,
012020210210,
012020211020,
012021010120,
012021010210,
012021012010,
012021012020,
012021020120,
012021020210,
012021021010,
012021021020,
012021102020,
012102012020,
012102012121,
012102020120,
012102021020,
012102102020,
012102102121,
012102120121,
012102121021,
012120120121,
012120121021,
012120121210,
012121002121,
012121020121,
012121021021,
012121021210,
012121210021,
012121210210,
012201010102,
012201010210,
012201012010,
012201021010,
0102102102102,
0102102102120,
0102102120102,
0102102120120,
0102120102102,
0102120102120,
0102120120102,
0102120120120,
0102210120120,
0120102102102,
0120102102120,
0120102120102,
0120102120120,
0120102210120,
0120120102102,
0120120102120,
0120120102210,
0120120120102,
0120120120120,
0120120120121,
0120120120210,
0120120121020,
0120120121021,
0120120121210,
0120120210120,
0120120210210,
0120120211020,
0120121020120,
0120121020121,
0120121021020,
0120121021021,
0120121210210,
0120210120120,
0120210120210,
0120210210120,
0120210210210,
0120211020120,
0121020120120,
0121020120121,
0121020121020,
0121020121021,
0121021020120,
0121021020121,
0121021021020,
0121021021021,
0121210210210,
0122010210210

\section{Evidence for Conjecture \ref{acbconj}}\label{abcform}
In this section, we list all sequences obtained by concatenating copies of $a b c$ and $a c b$, where the first permutation is $a b c$, that have formation width $x$ and alternation length $x+1$ for $x = 5, 7, 9, \dots, 23$. On each line, we list a binary string $s$, a sequence $u$, and $\fw(u)$. The sequence $u$ is obtained from $s$ by replacing each $0$ with $021$ and each $1$ with $012$. \\

\noindent 100 012021021 5  \\
101 012021012 5  \\
110 012012021 5  \\
111 012012012 5  \\

\noindent 1000 012021021021 7  \\
1001 012021021012 7  \\
1010 012021012021 7  \\
1011 012021012012 7  \\
1100 012012021021 7  \\
1101 012012021012 7  \\
1110 012012012021 7  \\
1111 012012012012 7 \\

\noindent 10000 012021021021021 9  \\
10001 012021021021012 9  \\
10110 012021012012021 9  \\
10111 012021012012012 9  \\
11000 012012021021021 9  \\
11001 012012021021012 9  \\
11100 012012012021021 9  \\
11101 012012012021012 9  \\
11110 012012012012021 9  \\
11111 012012012012012 9  \\

\noindent 100000 012021021021021021 11  \\
100001 012021021021021012 11  \\
101110 012021012012012021 11  \\
101111 012021012012012012 11  \\
110000 012012021021021021 11  \\
111100 012012012012021021 11  \\
111101 012012012012021012 11  \\
111110 012012012012012021 11  \\
111111 012012012012012012 11  \\

\noindent 1000000 012021021021021021021 13  \\
1000001 012021021021021021012 13  \\
1011110 012021012012012012021 13  \\
1011111 012021012012012012012 13  \\
1100000 012012021021021021021 13  \\
1111100 012012012012012021021 13  \\
1111101 012012012012012021012 13  \\
1111110 012012012012012012021 13  \\
1111111 012012012012012012012 13  \\

\noindent 10000000 012021021021021021021021 15  \\
10000001 012021021021021021021012 15  \\
10111110 012021012012012012012021 15  \\
10111111 012021012012012012012012 15  \\
11000000 012012021021021021021021 15  \\
11111100 012012012012012012021021 15  \\
11111101 012012012012012012021012 15  \\
11111110 012012012012012012012021 15  \\
11111111 012012012012012012012012 15  \\

\noindent 100000000 012021021021021021021021021 17  \\
100000001 012021021021021021021021012 17  \\
101111110 012021012012012012012012021 17  \\
101111111 012021012012012012012012012 17  \\
110000000 012012021021021021021021021 17  \\
111111100 012012012012012012012021021 17  \\
111111101 012012012012012012012021012 17  \\
111111110 012012012012012012012012021 17  \\
111111111 012012012012012012012012012 17  \\

\noindent 1000000000 012021021021021021021021021021 19  \\
1000000001 012021021021021021021021021012 19  \\
1011111110 012021012012012012012012012021 19  \\
1011111111 012021012012012012012012012012 19  \\
1100000000 012012021021021021021021021021 19  \\
1111111100 012012012012012012012012021021 19  \\
1111111101 012012012012012012012012021012 19  \\
1111111110 012012012012012012012012012021 19  \\
1111111111 012012012012012012012012012012 19  \\

\noindent 10000000000 012021021021021021021021021021021 21  \\
10000000001 012021021021021021021021021021012 21  \\
10111111110 012021012012012012012012012012021 21  \\
10111111111 012021012012012012012012012012012 21  \\
11000000000 012012021021021021021021021021021 21  \\
11111111100 012012012012012012012012012021021 21  \\
11111111101 012012012012012012012012012021012 21  \\
11111111110 012012012012012012012012012012021 21  \\
11111111111 012012012012012012012012012012012 21  \\

\noindent 100000000000 012021021021021021021021021021021021 23  \\
100000000001 012021021021021021021021021021021012 23  \\
101111111110 012021012012012012012012012012012021 23  \\
101111111111 012021012012012012012012012012012012 23  \\
110000000000 012012021021021021021021021021021021 23  \\
111111111100 012012012012012012012012012012021021 23  \\
111111111101 012012012012012012012012012012021012 23  \\
111111111110 012012012012012012012012012012012021 23  \\
111111111111 012012012012012012012012012012012012 23  \\
\end{document}